\renewcommand{\le}{\leqslant}
\renewcommand{\ge}{\geqslant}
\newcommand{\eps}{\varepsilon}
\newcommand{\emp}{\emptyset}
\newcommand{\Sig}{\Sigma}
\newcommand{\noin}{\noindent}
\newcommand{\bi}{\begin{itemize}}
\newcommand{\ei}{\end{itemize}}
\newcommand{\be}{\begin{enumerate}}
\newcommand{\ee}{\end{enumerate}}
\newcommand{\bd}{\begin{description}}
\newcommand{\ed}{\end{description}}
\newcommand{\bq}{\begin{quote}}
\newcommand{\eq}{\end{quote}}
\newcommand{\tid}{\mbox{{\bf 1}}}
\newcommand{\cD}{{\mathcal D}}
\newcommand{\cT}{{\mathcal T}}
\newcommand{\cW}{{\mathcal W}}
\title{Upper Bounds on Syntactic Complexity of Left and Two-Sided Ideals\thanks{This work was supported by the Natural Sciences and Engineering Research Council of Canada 
grant No.~OGP000087, 
and by Polish NCN grant DEC-2013/09/N/ST6/01194.}
}
\author{Janusz~Brzozowski\inst{1} \and Marek Szyku{\l}a \inst{2}}
\titlerunning{Syntactic Complexity of Ideals}
\authorrunning{J. Brzozowski and M. Szyku{\l}a}   
\institute{David R. Cheriton School of Computer Science, University of Waterloo, \\
Waterloo, ON, Canada N2L 3G1\\
\{{\tt brzozo@uwaterloo.ca}\}
\and
Institute of Computer Science, University of Wroc{\l}aw,\\
Joliot-Curie 15, PL-50-383 Wroc{\l}aw, Poland\\
\{{\tt msz@cs.uni.wroc.pl}\}
}
\begin{document}
\maketitle
\begin{abstract}
We solve two open problems concerning syntactic complexity.
We prove that the cardinality of the syntactic semigroup of a left ideal or a suffix-closed language with $n$ left quotients (that is, with state complexity $n$) is at most $n^{n-1}+n-1$, 
and that of a two-sided ideal or a factor-closed language is
at most $n^{n-2}+(n-2)2^{n-2}+1$.
Since these bounds are known to be reachable, this settles the problems.
\medskip

\noin
{\bf Keywords:}
 factor-closed, left ideal, regular language, suffix-closed, syntactic complexity, transition semigroup, two-sided ideal, upper bound
\end{abstract}

\section{Introduction}

The \emph{syntactic complexity}~\cite{BrYe11} of a regular language is the size of its syntactic semigroup~\cite{Pin97}. The \emph{transition semigroup} $T$ of a deterministic finite automaton (DFA) $\cD$ is the semigroup of transformations
of the state set of $\cD$ generated by the transformations induced by the input letters of $\cD$.
The transition semigroup of a minimal DFA of a language $L$ is isomorphic to the 
syntactic semigroup of $L$~\cite{Pin97}; hence syntactic complexity is equal to the cardinality of
$T$.

The number $n$ of states of $\cD$ is known as the \emph{state complexity} of the language~\cite{Brz10,Yu01}, and it is the same as the number of left quotients of the language. 
The \emph{syntactic complexity of a class} of regular languages is the maximal syntactic complexity of languages in that class expressed as a function of $n$.

A \emph{right ideal} (respectively, 
\emph{left ideal}, \emph{two-sided ideal}) is a non-empty language $L$ over an alphabet $\Sig$ such that $L=L\Sig^*$ (respectively, $L=\Sig^*L$, $L=\Sig^*L\Sig^*$).
We are interested only in regular ideals;
for reasons why they deserve to be studied see~\cite[Section~1]{BJL13}.
Ideals appear in pattern matching.
For example, if a \emph{text} is a word $w$ over some alphabet $\Sig$, and 
a \emph{pattern} is  an arbitrary language $L$ over $\Sig$, then
an occurrence of a pattern represented by $L$ in text $w$ is a triple $(u,x,v)$ such that $w=uxv$ and $x$ is in~$L$.
Searching text $w$ for words in $L$ is equivalent to looking for prefixes of $w$ that belong to the language $\Sig^*L$, which is the left ideal generated by $L$.

The syntactic complexity of right ideals was proved to be $n^{n-1}$ in~\cite{BrYe11}.
The syntactic complexities of left and two-sided ideals were also examined in~\cite{BrYe11}, where it was shown that
$n^{n-1}+n-1$ and $n^{n-2}+(n-2)2^{n-2}$, respectively, are lower bounds on these complexities, and it was conjectured that
they are also upper bounds.
In this paper we prove these conjectures.

If $w=uxv$ for some $u,v,x\in\Sigma^*$, then  $v$ is a {\em suffix\/} of $w$ and  
 $x$ is a {\em factor\/} of $w$.
A suffix of $w$ is also a factor of $w$.
A~language $L$ is {\it suffix-closed\/} (respectively, \emph{factor-closed}) if $w\in L$ implies that
every suffix (respectively, factor) of $w$ is also in~$L$. 
We are interested only in regular suffix- and factor-closed languages.
Since every left (respectively, two-sided) ideal is the complement of a 
suffix-closed (respectively, factor-closed) language, and syntactic complexity is preserved by complementation, our theorems also apply to suffix- and factor-closed languages, but our proofs are given for  left and two-sided ideals only.
\section{Preliminaries}

The \emph{left quotient} or simply \emph{quotient} of a regular language $L$ by a word $w$ is denoted by $Lw$ and defined by $Lw=\{x\mid wx\in L\}$.
A language is regular if and only if it has a finite number of quotients.
The number of quotients of $L$ is called its \emph{quotient complexity}. 
We denote the set of quotients by $K=\{K_0,\dots,K_{n-1}\}$, where $K_0=L=L\eps$ by convention.
Each quotient $K_i$ can be represented also as $L{w_i}$, where $w_i\in\Sig^*$ is such that
$L{w_i}=K_i$.

A \emph{deterministic finite automaton (DFA)} is a quintuple 
$\cD=(Q, \Sigma, \delta, q_0,F)$, where
$Q$ is a finite non-empty set of \emph{states},
$\Sig$ is a finite non-empty \emph{alphabet}
$\delta\colon Q\times \Sig\to Q$ is the \emph{transition function},
$q_0\in Q$ is the \emph{initial} state, and
$F\subseteq Q$ is the set of \emph{final} states.

The \emph{quotient DFA} of a regular language $L$ with $n$ quotients is defined by
$\cD=(K, \Sigma, \delta, K_0,F)$, where 
$\delta(K_i,w)=K_j$ if and only if ${K_i}w=K_j$, 
and $F=\{K_i\mid \eps \in K_i\}$.
To simplify the notation, we use the set $Q=\{0,\dots,n-1\}$ of subscripts of quotients to denote the states of $\cD$; then $\cD$ is denoted by
$\cD=(Q, \Sigma, \delta, 0,F)$. The quotient corresponding to $q\in Q$ is then $K_q=\{w\mid \delta(q,w)\in F\}$.
The quotient $K_0=L$ is the \emph{initial} quotient. A quotient is \emph{final} if it contains $\eps$.
A state $q$ is \emph{empty} if its quotient $K_q$ is empty.

The quotient DFA of $L$ is isomorphic to each complete minimal DFA of $L$.
The number of states in the quotient DFA of $L$ (the quotient complexity of $L$) is therefore equal to the state complexity of $L$.

In any DFA, each letter $a\in \Sig$ defines a transformation of the set $Q$ of $n$ states.
Let $\cT_{Q}$ be the set of all $n^n$ transformations of $Q$; then $\cT_{Q}$ is a monoid under  composition. 
The \emph{identity} transformation $\tid$ maps each element to itself.
For $k\ge 2$, a transformation (permutation) $t$ of a set $P=\{q_0,q_1,\ldots,q_{k-1}\} \subseteq Q$ is a \emph{$k$-cycle}
if $q_0t=q_1, q_1t=q_2,\ldots,q_{k-2}t=q_{k-1},q_{k-1}t=q_0$. 
A $k$-cycle is denoted by $(q_0,q_1,\ldots,q_{k-1})$.
If a transformation $t$ of $Q$ acts like a $k$-cycle on some $P \subseteq Q$, we say that $t$ has a $k$-cycle.
A~transformation has a \emph{cycle} if it has a $k$-cycle for some $k\ge 2$.
A~2-cycle $(q_0,q_1)$ is called a \emph{transposition}.
A transformation is \emph{constant} if it maps all states to a single state $q$; it is denoted by $(Q\to q)$.
If $w$ is a word of $\Sig^*$, the fact that $w$ induces transformation $t$ is denoted by 
$w\colon t$.
A~transformation mapping $i$ to $q_i$ for $i=0, \dots, n-1$ is sometimes denoted by
$[q_0, \dots,q_{n-1}]$.

\section{Left Ideals}
\subsection{Basic Properties}
Let $Q=\{0,\ldots,n-1\}$, 
let $\cD_n=(Q, \Sigma_\cD, \delta_\cD, 0,F)$ be a minimal DFA, and let $T_n$ be its transition semigroup.
Consider the sequence $(0,0t,0t^2,\dots)$ of states obtained by applying transformation $t\in T_n$ repeatedly, starting with the initial state.
Since $Q$ is finite, there must eventually be a repeated state, that is, there must exist $i$ and $j$ such that $0,0t,\dots,0t^i,0t^{i+1},\dots, 0t^{j-1}$ are distinct, but $0t^j=0t^i$;
the integer $j-i$ is the \emph{period} of $t$.
If the period  is $1$,  $t$ is said to be \emph{initially aperiodic};
then the sequence is $0,0t,\dots, 0t^{j-1}=0t^j$.

\begin{lemma}[\cite{BrYe11}]
\label{lem:aperiodic}
If $\cD_n$ is a DFA of a left ideal, all the transformations in $T_n$ are initially aperiodic, and no state of $\cD_n$ is empty.
\end{lemma}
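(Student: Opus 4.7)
The plan is to exploit the defining identity $L=\Sigma^{*}L$ to derive both conclusions simultaneously from a single monotonicity observation on quotients.

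First I would establish that every quotient of a left ideal contains $L$. If $x\in L=\Sigma^{*}L$, then for any $u\in\Sigma^{*}$ we have $ux\in\Sigma^{*}L=L$, so $x\in L_{u}$; hence $L\subseteq L_{u}$ for every $u$. Since a left ideal is non-empty by definition, $L$ is non-empty, and therefore every quotient $K_{q}$ is non-empty. Translating back to the DFA, this says that no state of $\cD_n$ is empty, settling the second claim.

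Next I would prove initial aperiodicity. Fix $t\in T_{n}$ and a word $w$ inducing $t$, so that $0t^{k}$ is the state corresponding to the quotient $L_{w^{k}}$. The key observation is that the sequence $L,L_{w},L_{w^{2}},\ldots$ is non-decreasing under inclusion: if $x\in L_{w^{k}}$, i.e.\ $w^{k}x\in L$, then $w^{k+1}x=w\cdot(w^{k}x)\in\Sigma^{*}L=L$, so $x\in L_{w^{k+1}}$. Hence $L_{w^{k}}\subseteq L_{w^{k+1}}$ for every $k\ge 0$, which means $0t^{k}$ and $0t^{k+1}$ correspond to quotients satisfying $K_{0t^{k}}\subseteq K_{0t^{k+1}}$.

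Now suppose $0,0t,\ldots,0t^{j-1}$ are distinct and $0t^{j}=0t^{i}$ for some $i<j$. Then the inclusions $K_{0t^{i}}\subseteq K_{0t^{i+1}}\subseteq\cdots\subseteq K_{0t^{j}}=K_{0t^{i}}$ collapse to equalities, and in particular $K_{0t^{i}}=K_{0t^{i+1}}$. Since $\cD_n$ is minimal, distinct states have distinct quotients, so $0t^{i}=0t^{i+1}$, forcing $j=i+1$ and period $1$. The main (very mild) obstacle is just recognising the correct inclusion direction $L_{w^{k}}\subseteq L_{w^{k+1}}$; everything else is immediate from $L=\Sigma^{*}L$ and minimality of the quotient DFA.
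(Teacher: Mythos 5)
Your proof is correct: both claims follow, as you show, from the monotonicity fact $Lw^k\subseteq Lw^{k+1}$ (an instance of $Ly\subseteq Lxy$, which the paper records as Remark~\ref{rem:left-ideals_xy}) together with non-emptiness of $L$ and minimality of $\cD_n$. The paper only cites this lemma from~\cite{BrYe11} without reproducing a proof, but your argument is the standard one and matches the ordering machinery ($\preceq$ on states) that the paper builds on top of this lemma.
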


\begin{remark}[\cite{BJL13}]
\label{rem:left-ideals_xy}
A language $L\subseteq \Sig^*$ is a left ideal if and only if for all $x,y\in\Sig^*$, $Ly\subseteq L{xy}$.
Hence, if $Lx\neq L$, then $L\subset Lx$ for any $x\in\Sig^+$.
\end{remark}

It is useful to restate this observation it terms of the states of $\cD_n$.
For DFA $\cD_n$ and states $p,q \in Q$, we write $p \prec q$ if $K_p \subset K_q$. 

\begin{remark}\label{rem:left-ideals_xy2}
A DFA $\cD_n$ is a minimal DFA of a left ideal if and only if for all $s,t\in T_n \cup \{\tid\}$, $0t\preceq 0st$.
If $0t\neq 0$, then $0\prec 0t$ for any $t\in T_n$.
Also, if $r\in Q$ has a $t$-predecessor, that is, if there exists $q\in Q$ such that $qt=r$, then $0t\preceq r$. 
(This follows because $q=0s$ for some transformation $s$ since $q$ is reachable from 0; hence $0\preceq q$ and $0t \preceq qt =r$.)
In particular, if $r$ appears in a cycle of $t$ or is a fixed point of $t$, then $0t\preceq r$.
\end{remark}

We consider chains of the form $K_{i_1}\subset K_{i_2}\subset\dots \subset K_{i_h}$,
where the $K_{i_j}$ are quotients of $L$. 
If $L$ is a left ideal, the smallest element of any maximal-length chain is always $L$.
Alternatively, we consider chains of states starting from 0 and strictly ordered by $\prec$.

\begin{proposition}\label{prop:chain}
For $t\in T_n$ and $p, q\in Q$, $p \prec q$ implies $pt \preceq qt$.
If $p\prec pt$, then $p \prec pt \prec \dots \prec pt^k = pt^{k+1}$ for some $k \ge 1$. 
Similarly, $p \succ q$ implies $pt \succeq qt$, and $p \succ pt$ implies $p \succ pt \succ \dots \succ pt^k = pt^{k+1}$ for some  $k \ge 1$.
\end{proposition}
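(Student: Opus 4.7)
The plan is to deduce all four statements from one observation. If $t \in T_n$ is induced by $v \in \Sigma^*$, then for every $q \in Q$ the quotient $K_{qt} = \{w \mid \delta(q, vw) \in F\} = \{w \mid vw \in K_q\}$ is the left quotient of $K_q$ by $v$. In particular $K_p \subseteq K_q$ implies $K_{pt} \subseteq K_{qt}$. From the strict hypothesis $p \prec q$ this yields the weak conclusion $pt \preceq qt$, proving the first assertion; the statement for $\succ$ is obtained by swapping $p$ and $q$.

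For the ascending chain, assume $p \prec pt$. By the monotonicity just established, $K_{pt^i} \subseteq K_{pt^{i+1}}$ for every $i \ge 0$, so the sequence $(K_{pt^i})_{i \ge 0}$ is nondecreasing. Since $Q$ is finite, there is a least $k \ge 1$ with $K_{pt^k} = K_{pt^{k+1}}$, equivalently $pt^k = pt^{k+1}$ by minimality of $\cD_n$; note $k \ge 1$ because $p \ne pt$. For every $i < k$ we have $K_{pt^i} \subseteq K_{pt^{i+1}}$ and $pt^i \ne pt^{i+1}$, so minimality again forces $K_{pt^i} \subsetneq K_{pt^{i+1}}$, i.e., $pt^i \prec pt^{i+1}$. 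Chaining these inequalities produces $p \prec pt \prec \cdots \prec pt^k = pt^{k+1}$. The descending case starting from $p \succ pt$ is identical after reversing every inclusion.

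I do not anticipate any serious obstacle; the argument is essentially a short unfolding of definitions. The only subtlety worth flagging is that the first assertion produces only the weak relation $pt \preceq qt$ from the strict $p \prec q$, because quotienting can identify previously distinct sets. Consequently, in the chain argument the strict links must be recovered not from the first assertion itself but from minimality of $\cD_n$, which guarantees that distinct states have distinct quotients.
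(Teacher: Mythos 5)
Your proof is correct, and it is the natural unfolding of definitions that the authors evidently intend: the paper states Proposition~\ref{prop:chain} without proof, and your key observation that $K_{qt}$ is the left quotient of $K_q$ by a word inducing $t$ (so that quotienting preserves inclusion, with strictness recovered from minimality of $\cD_n$) is exactly the mechanism implicit in Remarks~\ref{rem:left-ideals_xy} and~\ref{rem:left-ideals_xy2}. Your closing caveat--that the first assertion only yields $pt \preceq qt$ and the strict links in the chain come from minimality--is well placed and matches the weak conclusion the proposition itself asserts.
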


It was proved in~\cite[Theorem 4, p. 124]{BrYe11} that the transition semigroup of the following DFA of a left ideal meets the bound $n^{n-1}+n-1$.

\begin{definition}[Witness: Left Ideals]
\label{def:wit}
For $n\ge 3$, 
we define the DFA 
$\cW_n =(Q,\Sig_\cW,\delta_\cW,0,\{n-1\}),$
where $Q=\{0,\ldots,n-1\}$, $\Sig_\cW=\{a,b,c,d,e\}$, 
and $\delta_\cW$ is defined by $a\colon (1,\ldots,n-1)$,
$b\colon (1,2)$,
$c\colon (n-1\to 1)$,
$d\colon (n-1\to 0)$,
and $e\colon (Q \to 1)$.
For $n=3$, $a$ and $b$ coincide, and we can use $\Sig_\cW=\{b,c,d,e\}$.
\end{definition}

\begin{remark}
\label{rem:left-ideals_transformations}
In $\cW_n$, the transformations induced by $a$, $b$, and $c$ restricted to $Q\setminus \{0\}$ generate all the transformations of the last $n-1$ states. Together with the transformation of $d$, they generate all transformations of $Q$ that fix 0.
To see this, consider any transformation $t$ that fixes 0. If some states from $\{1,\dots,n-1\}$ are mapped to 0 by $t$, we can map them first to $n-1$ and $n-1$ to one of them by the transformations of $a$, $b$, and $c$, and then map $n-1$ to 0 by the transformation of $d$.
Also the words of the form $e a^i$ for $i \in \{0,\ldots,n-2\}$ induce constant transformations $(Q \to i+1)$. Hence  the transition semigroup of $\cW_n$ contains all the constant transformations.
\end{remark}

\begin{example}
One verifies that the maximal-length chains of quotients in  $\cW_n$ have length~2.
On the other hand, for $n\ge 2$, let $\Sig=\{a,b\}$ and let $L=\Sig^*a^{n-1}$.
Then $L$ has $n$ quotients and the maximal-length chains are of length $n$.
\end{example}

\subsection{Upper Bound}

Our main result of this section shows that the lower bound $n^{n-1}+ n-1$ is also an upper bound.
Our approach is as follows: We consider a minimal DFA $\cD_n=(Q, \Sigma_\cD, \delta_\cD, 0,F)$, where $Q=\{0,\ldots,n-1\}$, of an arbitrary left ideal with $n$ quotients and let $T_n$ be the transition semigroup of $\cD_n$. 
We also deal  with the witness DFA  $\cW_n =(Q,\Sig_\cW,\delta_\cW,0,\{n-1\})$ of Definition~\ref{def:wit} that has the same state set as $\cD_n$ and whose transition semigroup is $S_n$. We shall show that there is an injective mapping $f\colon T_n\to S_n$, and this will prove that $|T_n|\le |S_n|$.

\begin{remark}
\label{rem:smalln}
If $n=1$, the only left ideal is $\Sig^*$ and the transition semigroup of its minimal DFA satisfies the bound
$1^0+1-1=1$.
If $n=2$, there are only three allowed transformations, since the transposition
$(0,1)$ is not initially aperiodic and so is ruled out by Lemma~\ref{lem:aperiodic}.
Thus the bound $2^1+2-1=3$ holds.
\end{remark}


\begin{lemma}
\label{lem:chain2}
If $n\ge 3$ and a maximal-length chain in $\cD_n$ strictly ordered by $\prec$ has length 2, then $|T_n|\le n^{n-1}+n-1$ and $T_n$ is a subsemigroup of $S_n$.

\end{lemma}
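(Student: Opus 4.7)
The plan is to classify every $t \in T_n$ by the value of $0t$. I claim that if $0t = 0$ then $t$ can be any transformation fixing $0$ (so at most $n^{n-1}$ possibilities), whereas if $0t \neq 0$ then $t$ must be the constant map $(Q \to 0t)$ (so at most $n-1$ possibilities). Combining the two counts with Remark~\ref{rem:left-ideals_transformations} will then give both the cardinality bound and the embedding of $T_n$ into $S_n$.

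First I would translate the hypothesis into state language. Since $\cD_n$ is a minimal DFA of a left ideal, Remark~\ref{rem:left-ideals_xy2} gives $0 \preceq q$ for every state $q$, and minimality improves this to $0 \prec q$ whenever $q \neq 0$. Hence the maximal-length chain always has $0$ as its minimum; its length being exactly $2$ means that for any two distinct states $p, q \in Q \setminus \{0\}$ the quotients $K_p$ and $K_q$ are $\subset$-incomparable.

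The heart of the argument is the next step. Fix $t \in T_n$ with $0t \neq 0$, so $0 \prec 0t$. For any $q \in Q \setminus \{0\}$, the state $qt$ has $q$ as a $t$-predecessor, so Remark~\ref{rem:left-ideals_xy2} yields $0t \preceq qt$. This rules out $qt = 0$ (otherwise $0 \prec 0t \preceq 0$, a contradiction), so $0t$ and $qt$ are both nonzero and $\preceq$-comparable; by the length-$2$ chain assumption they must coincide, giving $qt = 0t$. Thus $t$ equals the constant $(Q \to 0t)$, with $0t$ ranging over $\{1, \ldots, n-1\}$, yielding at most $n-1$ such transformations. Adding the at most $n^{n-1}$ transformations that fix $0$ (free choice on the other $n-1$ states) produces the bound $|T_n| \le n^{n-1} + n - 1$.

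For the subsemigroup claim I would invoke Remark~\ref{rem:left-ideals_transformations}: $S_n$ already contains every transformation of $Q$ that fixes $0$ and contains the constants $(Q \to j)$ for $j \in \{1, \ldots, n-1\}$. Since the preceding paragraph shows $T_n$ is contained in the union of these two families, $T_n \subseteq S_n$ as sets, and since $T_n$ is closed under composition this makes it a subsemigroup of $S_n$. I expect the only nontrivial step to be the derivation that $0t \neq 0$ forces $t$ to be constant; once that dichotomy is established, the counting and the embedding are immediate.
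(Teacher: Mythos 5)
Your proposal is correct and follows essentially the same route as the paper: classify $t$ by whether $0t=0$ (at most $n^{n-1}$ maps fixing $0$) or $0t\neq 0$ (where Remark~\ref{rem:left-ideals_xy2} plus the length-$2$ chain hypothesis forces $t=(Q\to 0t)$, giving $n-1$ constants), then invoke Remark~\ref{rem:left-ideals_transformations} for containment in $S_n$. The only cosmetic difference is that you phrase the hypothesis as pairwise incomparability of nonzero states and treat all $q\neq 0$ uniformly (including $q=0t$), which the paper leaves implicit.
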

\begin{proof}
Consider an arbitrary transformation $t\in T_n$ and let $p=0t$. 
If $p=0$, then any state other than $0$ can possibly be mapped by $t$  to any one of the $n$ states; hence there are at most $n^{n-1}$ such transformations. All of these transformations are in $S_n$ by Remark~\ref{rem:left-ideals_transformations}.

If $p\neq 0$, then $0\prec p$. Consider any state $q\not \in \{0,p\}$; by Remark~\ref{rem:left-ideals_xy2}, $p\preceq qt$.
If $p \neq qt$, then $p\prec qt$.
But then we have the chain $0\prec p\prec qt$ of length 3, contradicting our assumption.
Hence we must have $p=qt$, and so $t$ is the constant transformation $t= (Q\to p)$.
Since $p$ can be any one of the $n-1$ states other than 0, we have at most $n-1$ such transformations. 
Since all of these transformations are in $S_n$ by Remark~\ref{rem:left-ideals_transformations}, $T_n$ is a subsemigroup of $S_n$.
\qed
\end{proof}

\begin{theorem}[Left Ideals, Suffix-Closed Languages]\label{thm:left-ideals_upper-bound}
If $n\ge 3$ and $L$ is a left ideal or a suffix-closed language with $n$ quotients, then its syntactic complexity is less than or equal to 
$n^{n-1}+n-1$.
\end{theorem}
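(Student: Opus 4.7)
I plan to exhibit an injective map $f \colon T_n \to S_n$; since $|S_n| = n^{n-1}+n-1$, this will establish the bound. The cases $n = 1, 2$ are handled by Remark~\ref{rem:smalln}, so I restrict to $n \ge 3$. I partition $T_n$ into $A = \{t \in T_n : 0t = 0\}$ and $B = T_n \setminus A$, and set $f(t) = t$ for $t \in A$; by Remark~\ref{rem:left-ideals_transformations} every transformation of $Q$ fixing $0$ is in $S_n$, so this is an injection of $A$ into the fix-$0$ portion of $S_n$, which has cardinality $n^{n-1}$.

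The easy sub-case is when the maximal chain length in $\cD_n$ is $2$. Here Lemma~\ref{lem:chain2} forces every $t \in B$ to be a constant $(Q \to p)$ with $p \in \{1, \dots, n-1\}$, giving at most $n-1$ transformations. Each such constant is already in $S_n$ and disjoint from $f(A)$, so extending $f(t) = t$ to $B$ completes the injection.

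In the hard sub-case, where the maximal chain length is at least $3$, $B$ may contain non-constant transformations. The key observation is that the existence of a chain $0 \prec p \prec r$ enforces, via Proposition~\ref{prop:chain}, the constraint $pt \preceq rt$ on every $t \in T_n$; this excludes many fix-$0$ transformations from $A$ and creates ``free'' slots in $S_n \setminus A$ (note that every fix-$0$ transformation of $Q$ is in $S_n$). The plan is then to send each constant $t \in B$ to the matching constant of $S_n$, and each non-constant $t \in B$ (whose value $0t = p$ must be non-maximal, since its image lies in $U_p = \{q : p \preceq q\}$ by Remark~\ref{rem:left-ideals_xy2} and has size at least $2$) to a canonically chosen chain-violating fix-$0$ transformation, constructed from $t$ by modifications that encode both $t$ and the value $p = 0t$.

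The main obstacle is defining this canonical map on non-constant elements of $B$ explicitly so that it is injective, lands outside $A$, and avoids the images of the constants. The construction must exploit Remark~\ref{rem:left-ideals_xy2} (which constrains $t$-images to $U_{0t}$) together with Proposition~\ref{prop:chain} (which lets one locate a chain $0 \prec p \prec r$ witnessed inside the non-constant action of $t$), pairing each non-constant $t \in B$ with a distinct fix-$0$ transformation that violates chain-preservation, hence lies in $S_n \setminus A$. Verifying that distinct $t$'s produce distinct images and that no collisions with the constants occur is where the technical weight of the proof lies.
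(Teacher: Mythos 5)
Your high-level strategy coincides with the paper's: build an injection $f\colon T_n\to S_n$, send every $t\in S_n$ (in particular every transformation fixing $0$, and every constant) to itself, dispose of the chain-length-$2$ case via Lemma~\ref{lem:chain2}, and for the remaining transformations exploit the fact that a chain $0\prec p\prec r$ forces every $t\in T_n$ to satisfy $pt\preceq rt$, so that chain-violating transformations of $S_n$ are ``free slots'' not occupied by $f$ on $S_n\cap T_n$. This is exactly the right frame.

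However, there is a genuine gap: the entire technical core of the theorem is the explicit definition of $f$ on the transformations $t$ with $0t=p\neq 0$ that are not already in $S_n$, together with the verification that the images are pairwise distinct and disjoint from the images of the earlier cases. You explicitly defer this (``The main obstacle is defining this canonical map\dots'', ``where the technical weight of the proof lies'') without carrying it out, and it is not a routine step. In the paper this requires a case analysis according to the structure of $t$: whether $0t^2\neq 0t$ (in which case the chain $p\prec pt\prec\dots\prec pt^k$ is closed into a cycle and $0$ is made a fixed point), whether $t$ has a cycle (in which case $p$ is redirected into that cycle), whether $t$ has a fixed point $r\neq p$ (in which case all such fixed points are redirected to $0$), or whether some $r\succ p$ has $rt=p$ (a third modification); one must then prove for each construction that the resulting $s$ lies in $S_n\setminus T_n$, that $s$ carries a uniquely recoverable ``signature'' (e.g., the unique state $q$ with $0\prec q\prec qs$ and $qs\not\preceq qs^2$, or the unique fixed point other than $0$) from which $t$ can be reconstructed, and that the signatures of the different cases are mutually exclusive. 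Finally one must prove that these cases exhaust all of $T_n\setminus S_n$, which itself needs an argument (if $t$ has no cycle, no fixed point other than $p$, and no $r\succ p$ with $rt=p$, one derives a contradiction or shows $t$ is constant). None of this is present in your proposal, and a vague appeal to ``canonically chosen chain-violating fix-$0$ transformations'' does not substitute for it: without an explicit construction there is no way to check injectivity, and a pure counting argument (comparing the number of free slots to $|B|$) is not obviously available. As written, the proposal is a correct plan but not a proof.
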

\begin{proof}
It suffices to prove the result for left ideals.
For a transformation $t \in T_n$, consider the following cases:
\goodbreak
\smallskip

\noindent
\hglue 15pt  
{\bf Case 1:} $t \in S_n$. \\
Let $f(t) = t$; obviously $f(t)$ is injective.
\smallskip

\noindent
\hglue 15pt  
{\bf Case 2:} $t \not\in S_n$ and $0t^2 \neq 0t$.\\
Note that $t \not\in S_n$ implies $0 t \neq 0$ by Remark~\ref{rem:left-ideals_transformations}.
Let $0t=p$.
We have $p=0t \prec 0tt=pt$ by Remark~\ref{rem:left-ideals_xy2}. Let $p \prec \dots \prec p t^k = p t^{k+1}$ be the chain defined from $p$; 
this chain is of length at least 2.
Let $f(t)=s$, where $s$ is the transformation defined by

\begin{center} 
$0 s = 0, \quad p t^k s = p, \quad q s = q t \text{ for the other states } q\in Q.$
\end{center}
Transformation $s$ is shown in Figure~\ref{fig:left-case2}, where the dashed transitions show how $s$ differs from $t$.
\begin{figure}[ht]
\unitlength 10pt
\begin{center}\begin{picture}(26,12)(0,0)
\gasset{Nh=2.5,Nw=2.5,Nmr=1.25,ELdist=0.5,loopdiam=2}
\node[Nframe=n](name)(0,12){$t\colon$}
\node(0)(2,10){0}
\node(p)(8,10){$p$}
\node(pt)(14,10){$pt$}
\node[Nframe=n](pdots)(20,10){$\dots$}
\node(pt^k)(26,10){$pt^k$}
\drawedge(0,p){$t$}
\drawedge(p,pt){$t$}
\drawedge(pt,pdots){$t$}
\drawedge(pdots,pt^k){$t$}
\drawloop[loopangle=90](pt^k){$t$}

\node[Nframe=n](name)(0,4){$s\colon$}
\node(0')(2,2){0}
\node(p')(8,2){$p$}
\node(pt')(14,2){$pt$}
\node[Nframe=n](pdots')(20,2){$\dots$}
\node(pt^k')(26,2){$pt^k$}
\drawloop[loopangle=90,dash={.5 .25}{.25}](0'){$s$}
\drawedge(p',pt'){$s$}
\drawedge(pt',pdots'){$s$}
\drawedge(pdots',pt^k'){$s$}
\drawedge[curvedepth=2,dash={.5 .25}{.25}](pt^k',p'){$s$}
\end{picture}\end{center}
\caption{Case~2 in the proof of Theorem~\ref{thm:left-ideals_upper-bound}.}
\label{fig:left-case2}
\end{figure}
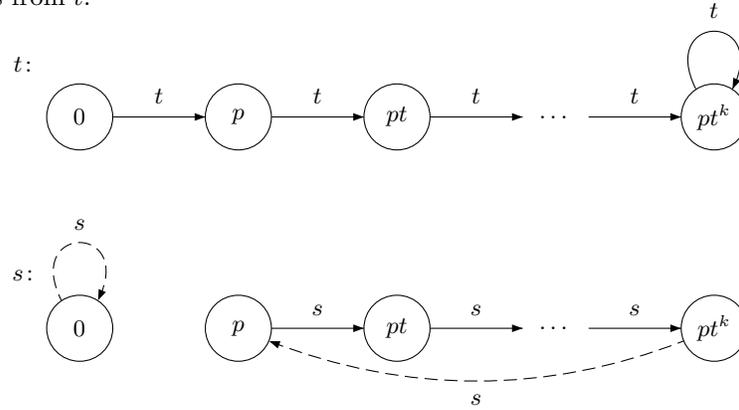

By Remark~\ref{rem:left-ideals_transformations}, $s \in S_n$.
However, $s\not\in T_n$, as it contains the cycle $(p, \ldots, p t^k)$ with states strictly ordered by $\prec$ in DFA $\cD_n$, which contradicts Proposition~\ref{prop:chain}.
Since $s \not\in T_n$, it is distinct from the transformations defined in Case~1.

In going from $t$ to $s$, we have added one transition ($0s=0$) that is a fixed point, and one ($pt^ks=p$) that is not.
Since only one non-fixed-point transition has been added, there can be only one cycle in $s$ with states strictly ordered by $\prec$.
Since $0$ can't appear in this cycle, $p$ is its smallest element with respect to $\prec$.

Suppose now that $t'\neq t$ is another transformation that satisfies Case~2, that is, $0 t' = p' \neq 0$ and $p' t' \neq p'$; 
we shall show that $f(t) \neq f(t')$.
Define $s'$ for $t'$ as $s$ was defined for $t$.
For a contradiction, assume $s = f(t) = f(t')=s'$. 

Like $s$, $s'$ contains only one cycle strictly ordered by $\prec$, and $p'$ is its smallest element.
Since we have assumed that $s=s'$, we must have $p=0t=0t'=p'$ and the cycles in $s$ and $s'$ must be identical. 
In particular, $p t^k t = p t^k = p (t')^k t' = p (t')^k$.
For $q$ of $Q\setminus \{0,pt^k\}$, we have $qt=qs=qs'=qt'$.  
Hence $t = t'$---a contradiction. 
Therefore $t\neq t'$ implies $f(t)\neq f(t')$.
\smallskip

\noindent
\hglue 15 pt
{\bf Case 3:}  $t \not\in S_n$ and $0t^2 = 0t$. \\
As before, let $0t=p$. Consider any state $q\not\in \{0,p\}$; then
$0\prec q$ by Remark~\ref{rem:left-ideals_xy2} and $0t\preceq qt$ by Proposition~\ref{prop:chain}.
Thus either $p\prec qt$, or $p= qt$.
We consider the following sub-cases:
\smallskip

\noindent
\hglue 15 pt
$\bullet$ {\bf (a):} $t$ has a cycle.\\
Since $t$ has a cycle, take a state $r$ from the cycle; then $r$ and $rt$ are not comparable under $\preceq$ by Proposition~\ref{prop:chain}, and $p \prec r$ by Remark~\ref{rem:left-ideals_xy2}.
Let $f(t) = s$, where $s$ is the transformation shown in Figure~\ref{fig:left-case3a} and defined by
\begin{center}
$0 s = 0, \quad p s = r, \quad q s = q t \text{ for the other states } q \in Q.$
\end{center}
\begin{figure}[ht]
\unitlength 10pt
\begin{center}\begin{picture}(19,13)(0,0)
\gasset{Nh=2.5,Nw=2.5,Nmr=1.25,ELdist=0.5,loopdiam=2}
\node[Nframe=n](name)(0,13){$t\colon$}
\node(0)(2,10){0}
\node(p)(8,10){$p$}
\node(r)(14,10){$r$}
\node(rt)(16.5,13){$rt$}
\node[Nframe=n](rdots)(19,10){$\dots$}
\drawedge(0,p){$t$}
\drawloop[loopangle=90](p){$t$}
\drawedge[curvedepth=1](r,rt){$t$}
\drawedge[curvedepth=1](rt,rdots){$t$}
\drawedge[curvedepth=1](rdots,r){$t$}

\node[Nframe=n](name)(0,4){$s\colon$}
\node(0')(2,2){0}
\node(p')(8,2){$p$}
\node(r')(14,2){$r$}
\node(rt')(16.5,5){$rt$}
\node[Nframe=n](rdots')(19,2){$\dots$}
\drawloop[loopangle=90,dash={.5 .25}{.25}](0'){$s$}
\drawedge[dash={.5 .25}{.25}](p',r'){$s$}
\drawedge[curvedepth=1](r',rt'){$s$}
\drawedge[curvedepth=1](rt',rdots'){$s$}
\drawedge[curvedepth=1](rdots',r'){$s$}
\end{picture}\end{center}
\caption{Case~3(a) in the proof of Theorem~\ref{thm:left-ideals_upper-bound}.}
\label{fig:left-case3a}
\end{figure}
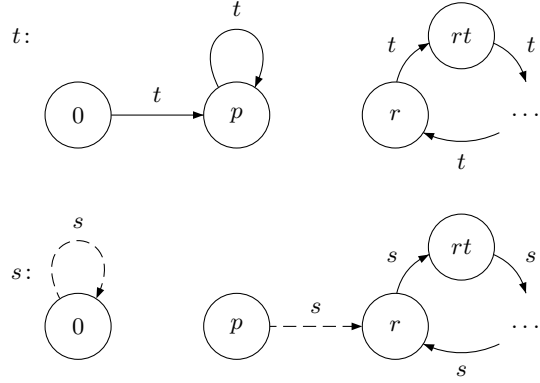

By Remark~\ref{rem:left-ideals_transformations}, $s \in S_n$. 
Suppose that $s \in T_n$; since $p \prec r$, we have 
$r=ps \preceq rs=rt$ by the definition of $s$ and Proposition~\ref{prop:chain};
this contradicts that $r$ and $rt$ are not comparable.
Hence $s \not\in T_n$, and so $s$ is distinct from the transformations of Case~1.

We claim that $p$ is not in a cycle of $s$; this cycle would have to be
\begin{center}
 $p\stackrel{s }{\rightarrow} r \stackrel{s }{\rightarrow} rt \stackrel{s }{\rightarrow} \dots \stackrel{s }{\rightarrow} rt^{k-1}
\stackrel{s}{\rightarrow} p,
\text { that is, }
p\stackrel{s }{\rightarrow} r \stackrel{t }{\rightarrow} rt \stackrel{t }{\rightarrow} \dots \stackrel{t }{\rightarrow} rt^{k-1}
\stackrel{t}{\rightarrow} p,
$
\end{center}
\goodbreak
\noin
for some $k\ge 2$
because $r\neq p=pt$ and $rt\neq p$.
Since $p\prec r$  we have $p \prec rt$; but then we have a chain
$p\prec rt \prec \dots \prec rt^{k}=p$, contradicting Proposition~\ref{prop:chain}.

Since $p$ is not in a cycle of $s$, it follows that $s$ does not contain a cycle with states strictly ordered by $\prec$, as such a cycle would also be in $t$. So $s$ is distinct from the transformations of Case~2.

We claim there is a unique state $q$ such that (a) $0 \prec q \prec q s$, (b) $q s \not\preceq q s^2$.
First we show that $p$ satisfies these conditions: 
(a) holds because $ps=r$ and $p\prec r$;
(b) holds because $ps=r$, $ps^2=rt$ and $r$ and $rt$ are not comparable.
Now suppose that $q$ satisfies the two conditions, but $q \neq p$.
Note that $qs\neq p$, because $qs=p$ implies $qs=p\prec r=qs^2$, contradicting (b).
Since $q,q s \not\in \{0,p\}$, we have $q t = q s \not\preceq q s^2 = q t^2$. But Proposition~\ref{prop:chain} for $q \prec q t$ implies that $q t \preceq q t^2$---a contradiction.
Thus $p$ is the only state satisfying these conditions.

If $t' \neq t$ is another transformation satisfying the conditions of this case, we define $s'$ like $s$. Suppose that $s = f(t) = f(t') = s'$. Since both $s$ and $s'$ contain a unique state $p$ satisfying the two conditions above, we have $0 t = 0 t' = p$ and $p t = p t' = p$. 
Since the other states  are mapped by $s$ exactly as by $t$ and $t'$, we have $t = t'$.

\noindent
\hglue 15pt
$\bullet$ {\bf (b):} $t$ has no cycles and has a fixed point $r\neq p $.\\
Because $0\prec r$ by Remark~\ref{rem:left-ideals_xy2},
$0t\preceq rt$ by Proposition~\ref{prop:chain}.
If $r$ is a fixed point of $t$, then $p=0t \preceq rt=r$.
Since $r\neq p$, we have $p\prec r$.
Let $f(t)=s$, where $s$ is the transformation shown in Figure~\ref{fig:left-case3b} and defined by
\begin{center}
$0s=0, \quad q s = 0 \text{ for each fixed point } q\neq p,$

$q s = q t \text{ for the other states } q\in Q.$
\end{center}
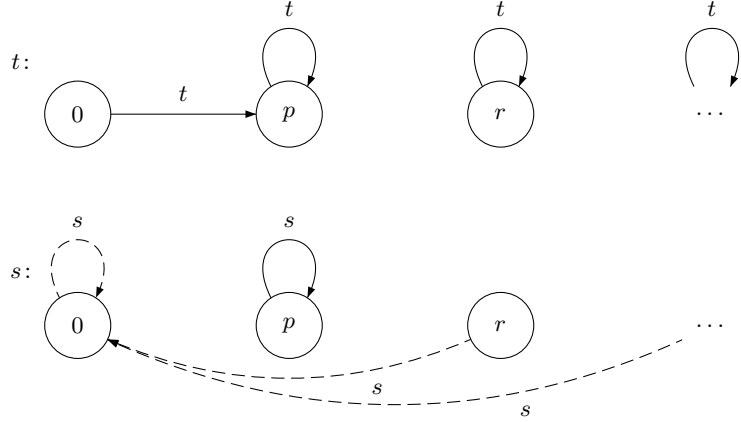
\begin{figure}[ht]
\unitlength 10pt
\begin{center}\begin{picture}(26,12)(0,0)
\gasset{Nh=2.5,Nw=2.5,Nmr=1.25,ELdist=0.5,loopdiam=2}
\node[Nframe=n](name)(0,12){$t\colon$}
\node(0)(2,10){0}
\node(p)(10,10){$p$}
\node(r)(18,10){$r$}
\node[Nframe=n](rdots)(26,10){$\dots$}
\drawedge(0,p){$t$}
\drawloop[loopangle=90](p){$t$}
\drawloop[loopangle=90](r){$t$}
\drawloop[loopangle=90](rdots){$t$}

\node[Nframe=n](name)(0,4){$s\colon$}
\node(0')(2,2){0}
\node(p')(10,2){$p$}
\node(r')(18,2){$r$}
\node[Nframe=n](rdots')(26,2){$\dots$}
\drawloop[loopangle=90,dash={.5 .25}{.25}](0'){$s$}
\drawloop[loopangle=90](p'){$s$}
\drawedge[curvedepth=2,ELpos=30,dash={.5 .25}{.25}](r',0'){$s$}
\drawedge[curvedepth=3,ELpos=30,dash={.5 .25}{.25}](rdots',0'){$s$}
\end{picture}\end{center}
\caption{Case~3(b) in the proof of Theorem~\ref{thm:left-ideals_upper-bound}.}
\label{fig:left-case3b}
\end{figure}

By Remark~\ref{rem:left-ideals_transformations}, $s \in S_n$.  Suppose that $s \in T_n$; because $p \prec r$, $ps=p$, $rs=0$, and 
$ps \preceq rs$ by Proposition~\ref{prop:chain}, we have $p \prec 0$, which is a contradiction.
Hence $s$ is not in $T_n$
and so is distinct from the transformations of Case~1. Also, $s$ maps at least one state other than $0$ to $0$, and so is distinct from the transformations of Case~2  and also from the transformations of Case~3(a).

If $t'\neq t$ is another transformation satisfying the conditions of this case, we define $s'$ like $s$.
Now suppose that $s = f(t) = f(t')=s'$.
There is only one fixed point of $s$ other than $0$ ($p s = p$), and only one fixed point of $s'$ other than 0 ($p' s' = p'$); hence $0 t= p= p'= 0 t'$. 
By the definition of $s$, for each state $q\neq 0$ such that $q s = 0$, we have $q t = q$.
Similarly, for each state $q\neq 0$ such that $qs'=0$, we have $q t' = q$. 
Hence $t$ and $t'$ agree on these states.
Since the remaining states are mapped by $s$ exactly as they are mapped by $t$ and $t'$, we have $t = t'$.
Thus we have proved that $t\neq t'$ implies $f(t)\neq f(t')$. 
\smallskip

\noindent
\hglue 15 pt
$\bullet$ {\bf (c):} $t$ has no cycles, has no fixed point $r\neq p$ and there is a state $r$ 
such that $p\prec r$ with $rt=p$.\\
Let $f(t) = s$, where $s$ is the transformation shown in Figure~\ref{fig:left-case3c} and defined by
\begin{center}
$0 s = 0, \hspace{.2cm} p s = r, \hspace{.2cm} q s = 0 \text{ for each $q \succ p$ such that } q t = p,$

$\hspace{.2cm} q s = q t \text{ for the other states } q \in Q.$
\end{center}
\begin{figure}[ht]
\unitlength 10pt
\begin{center}\begin{picture}(26,12)(0,0)
\gasset{Nh=2.5,Nw=2.5,Nmr=1.25,ELdist=0.5,loopdiam=2}
\node[Nframe=n](name)(0,12){$t\colon$}
\node(0)(2,10){0}
\node(p)(10,10){$p$}
\node(r)(18,10){$r$}
\node[Nframe=n](rdots)(26,10){$\dots$}
\drawedge(0,p){$t$}
\drawloop[loopangle=90](p){$t$}
\drawedge(r,p){$t$}
\drawedge[curvedepth=3](rdots,p){$t$}

\node[Nframe=n](name)(0,4){$s\colon$}
\node(0')(2,2){0}
\node(p')(10,2){$p$}
\node(r')(18,2){$r$}
\node[Nframe=n](rdots')(26,2){$\dots$}
\drawloop[loopangle=90,dash={.5 .25}{.25}](0'){$s$}
\drawedge[dash={.5 .25}{.25}](p',r'){$s$}
\drawedge[curvedepth=2,ELpos=30,dash={.5 .25}{.25}](r',0'){$s$}
\drawedge[curvedepth=3,ELpos=30,dash={.5 .25}{.25}](rdots',0'){$s$}
\end{picture}\end{center}
\caption{Case~3(c) in the proof of Theorem~\ref{thm:left-ideals_upper-bound}.}
\label{fig:left-case3c}
\end{figure}
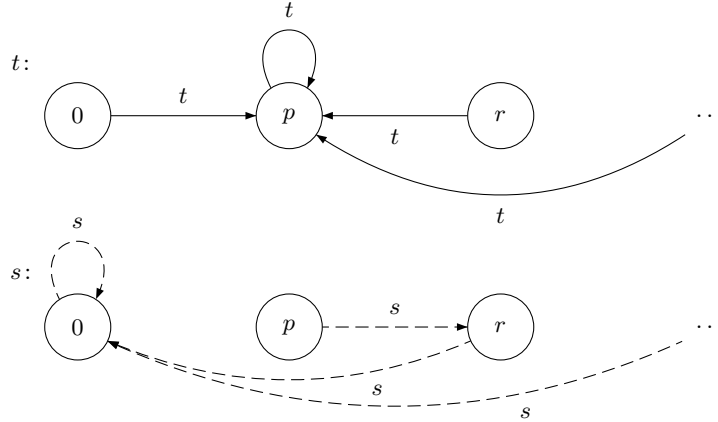

By Remark~\ref{rem:left-ideals_transformations}, $s \in S_n$. 
Suppose that $s \in T_n$; because $p \prec r$, $p s = r$, $r s = 0$, and $r= p s \preceq r s=0$ by Proposition~\ref{prop:chain}, we have $r \prec 0$---a contradiction. Hence $s \not\in T_n$ and $s$ is distinct from the transformations of Case~1.

Because $s$ maps at least one state other than $0$ to $0$ ($r s = 0$), it is distinct from the transformations of Case~2 and~3(a).
Also $s$ does not have a fixed point other than $0$, while the transformations of Case~3(b) have such a fixed point.

We claim that there is a unique state $q$ such that (a) $0 \prec q \prec q s$ and (b) $q s^2 = 0$. First we show that $p$ satisfies these conditions.
By assumption $0\prec p\prec r$ and  $rt=p$; also $rs=0$
by the definition of $s$.
Condition (a) holds because $0 \prec p \prec r = p s$, and (b) holds because $0 = r s = p s^2  $.

Now suppose that $0 \prec q \prec q s$, $q s^2 = 0$ and $q \neq p$.
Since $q s \neq 0$, we have $q s = q t$ by the definition of $s$. 
Because $q t$ has a $t$-predecessor, $p \preceq q t$ by Remark~\ref{rem:left-ideals_xy2}.
Also $q t = q s \neq p$, for $qs = p$ implies $0=qs^2=ps=r$---a contradiction.
Hence $p\prec q t$.
From $qt = qs$ and $q \prec q s$, we have $q \prec q t$.
Since $q s^2 = 0$ we have $(q t) s =0$ and so $(q t) t = p$,
 by the definition of $s$.
By Proposition~\ref{prop:chain}, from $q \prec q t $ we have $q t \preceq (q t) t = p$, contradicting $p \prec q t$.
So $q=p$.

If $t' \neq t$ is another transformation satisfying the conditions of this case, we define $s'$ like $s$. Suppose that $s = f(t) = t(t') = s'$. Since $s$ and $s'$ contain a unique state $p$ satisfying the two conditions above, we have $0 t = 0 t' = p$ and $p t = p t' = p$. Then $r$ and the states $q \succ p$ with $qt=p$ are determined by $p$, since they are precisely the states $q \succ p$ with $qs = 0$. Since the other states are mapped by $s$ exactly as by $t$ and $t'$, we have $t = t'$, and $f$ is again injective.
\smallskip

\hglue 5pt $\bullet$ {\bf All cases are covered:} \\ 
Now we need to ensure that any transformation $t$ fits in at least one case. It is clear that $t$ fits in Case~1 or~2 or~3. For Case~3, it is sufficient to show that if (i) $t \not\in S_n$ does not contain a fixed point $r\neq p$, and (ii) there is no state $r$ with $p \prec r$ and $r t = p$, then $t$ contains a cycle.

First, if there is no $r$ such that $p\prec r$, we claim that $t$ is the constant transformation $(Q\to p)$.
Consider any state $q \in Q$ such that $q t \neq p$. 
Then $p\prec qt$ by Remark~\ref{rem:left-ideals_xy2}, contradicting that there is no state $r$ such that $p\prec r$.

So let $r$ be some state such that $p\prec r$. Consider the sequence $r, rt, rt^2, \ldots$. By Remark~\ref{rem:left-ideals_xy2}, $p \preceq rt^i$ for all $i \ge 0$. 
If $rt^k = p$ for some $k \ge 1$, let $i$ be the smallest such $k$; we have $(rt^{i-1}) t = p$, contradicting (ii). Since $p$ is the only fixed point by (i), we have $rt^i \neq rt^{i-1}$.
Since there are finitely many states, $rt^i = rt^j$ for some $i$ and $j$ such that $0 \le i < j-1$, and so the states $rt^i,rt^{i+1}, \ldots, rt^j=rt^i$ form a cycle.
\smallskip

We have shown that for every transformation $t$ in $T_n$ there is a corresponding transformation $f(t)$ in $S_n$, and $f$ is injective. So $|T_n|\leq |S_n|=n^{n-1}+n-1$.
\qed
\end{proof}

Next we prove that $S_n$ is the only transition semigroup meeting the bound. It follows that minimal DFAs of left ideals with the maximal syntactic complexity have maximal-length chains of length $2$.

\begin{theorem}\label{thm:left-ideals_unique-maximal}
If $T_n$ has size $n^{n-1}+n-1$, then $T_n=S_n$. 
\end{theorem}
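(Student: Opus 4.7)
The approach is to exploit the injection $f\colon T_n\to S_n$ built in the proof of Theorem~\ref{thm:left-ideals_upper-bound}: the assumption $|T_n|=|S_n|=n^{n-1}+n-1$ forces $f$ to be a bijection. By Lemma~\ref{lem:chain2}, if the longest chain (strictly ordered by $\prec$) in $\cD_n$ has length~$2$, then $T_n$ is already a subsemigroup of $S_n$ and the equal cardinalities give $T_n=S_n$; hence the real task is to rule out a chain of length $\ge 3$.

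Suppose, for contradiction, that $0\prec q_1\prec q_2$ in $\cD_n$. I would construct an explicit $s\in S_n$ that is not in $f(T_n)$, contradicting surjectivity. A natural candidate is $s\in\cT_Q$ given by $0s=0$, $q_1 s=q_2$, and $qs=0$ for every $q\in Q\setminus\{0,q_1\}$. This $s$ fixes $0$, so $s\in S_n$ by Remark~\ref{rem:left-ideals_transformations}; and $s\notin T_n$ because Proposition~\ref{prop:chain} applied to $q_1\prec q_2$ would require $q_1 s\preceq q_2 s$, i.e.\ $q_2\preceq 0$, which is false. It then remains to check that $s$ is not in the image of any of the five cases defining $f$. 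Case~1 is immediate (it would require $s\in T_n$); Cases~2 and~3(a) produce images that contain a non-trivial cycle, whereas every orbit of $s$ reaches the sink $0$ in at most two steps; and Case~3(b) produces exactly one non-$0$ fixed point, whereas $s$ has none.

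The main obstacle is Case~3(c), whose image shape (no cycle, no non-$0$ fixed point) superficially matches $s$. I would reconstruct the hypothetical preimage $t\in T_n\setminus S_n$ from the case's defining formulas: the unique choice with $p\prec r$ and $ps=r\ne 0$ is $p=q_1$, $r=q_2$. Next, every $q\in Q\setminus\{0,q_1\}$ satisfies $qs=0$, and in the Case~3(c) construction this can only arise from the rule ``$q\succ p$ with $qt=p$'', because the alternative (``$q$ is other'') would demand $qt=qs=0$, giving $0$ a $t$-predecessor $q\ne 0$ and violating Remark~\ref{rem:left-ideals_xy2} ($0t=q_1\not\preceq 0$). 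Hence $qt=q_1$ for every $q\in Q\setminus\{0,q_1\}$, and together with $q_1 t=p=q_1$ this forces $t=(Q\to q_1)$. But constant transformations lie in $S_n$, contradicting the standing Case~3 hypothesis $t\notin S_n$. Therefore $s\notin f(T_n)$, the surjectivity of $f$ fails, the chain of length $\ge 3$ cannot exist, and Lemma~\ref{lem:chain2} finishes the argument.
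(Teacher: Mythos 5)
Your proposal is correct and follows essentially the same strategy as the paper: reduce to the case of a chain $0 \prec q_1 \prec q_2$ via Lemma~\ref{lem:chain2}, then exhibit an explicit $s \in S_n \setminus f(T_n)$ to contradict surjectivity of the injection $f$. Your witness $s$ (sending every state outside $\{0,q_1\}$ to $0$) differs slightly from the paper's (which applies the Case~3(c) rules to the constant $(Q\to p)$, sending only the states above $p$ to $0$ and the rest to $p$), but the decisive step---showing that any Case~3(c) preimage would be forced to be a constant transformation and hence lie in $S_n$---is the same.
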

\begin{proof}
Consider a maximal-length chain of states strictly ordered by $\prec$ in $\cD_n$. If its length is 2, then by Lemma~\ref{lem:chain2}, $T_n$ is a subsemigroup of $S_n$. Thus only $T_n = S_n$ reaches the bound in this case.

Assume now that the length of a maximal-length chain is at least 3. 
Then there are states $p$ and $r$ such that $0 \prec p \prec r$.
Let $R=\{q \mid p\prec q\}$, and let $X=Q\setminus (R\cup \{0,p\})$.
We shall show that there exists a transformation $s$ that is in $S_n$ but not in $f(T_n)$.
To define $s$ we use  the constant transformation $u=(Q\to p)$ as an auxiliary transformation.
Let $0s=0$, $ps=r$, $rs=0$ for all $r\in R$, and $qs = qu=p$ for $q\in X$; these are precisely the rules  we used in Case 3(c) in the proof of Theorem~\ref{thm:left-ideals_upper-bound}.
By Remark~\ref{rem:left-ideals_transformations}, $s \in S_n$.

It remains to be shown that there is no transformation $t\in T_n$ such that $s=f(t)$.
The proof that $s$ is different from the transformations $f(t)$ of Cases~1, 2, 3(a) and 3(b) is exactly the same as the corresponding proof in Case~3(c) following the definition of $s$.

It remains to verify that there is no $u' \in T_n$\ in Case~3(c) such that $f(u')=s$.
Suppose there is such a $u'$.  
Recall that states $p$ and $r$ satisfying $0 \prec p \prec r$ have been fixed by assumption. 
By the definition of $s$, state  $p$ satisfies the conditions (a) $0 \prec p \prec ps$ and (b) $ps^2 = 0$.
We claim that $p$ is the only state satisfying these conditions.
Indeed, if $q \neq p$ then either $qs = 0$, $q \not\prec qs=0$ and (a) is violated, or
$qs = p$,  $qs^2 = ps = r \neq 0$ and (b) is violated.
This observation is used in the proof of Case~3(c) to prove the claim below.

Both $u$ and $u'$ satisfy the conditions of Case~3(c), except that $u$ fails the condition $u\not\in S_n$. 
However, that latter condition is not used in the proof that 
if $u \neq u'$ and $u'$ satisfy the other conditions of Case~3(c), then $s' \neq s$, where $s'$ is the transformation obtained from $u'$ by the rules of $s$. Thus $s$ is also different from the transformations in $f(T_n)$ from Case~3(c).

Because $s \not\in f(T_n)$, $s \in S_n$ and $f(T_n) \subseteq S_n$,
the bound $n^{n-1}+n-1$ cannot be reached if the length of the maximal-length chains is not 2.
\qed
\end{proof}

\section{Two-Sided Ideals}

If a language $L$ is a right ideal, then $L=L\Sig^*$ and $L$ has exactly one final quotient, namely $\Sig^*$;
hence this also holds for two-sided ideals.
For $n\ge 3$, in a two-sided ideal every maximal chain is of length at least 3: it starts with $L$, every quotient contains $L$ and is contained in $\Sig^*$.

It was proved in~\cite[Theorem 6, p. 125]{BrYe11} that the transition semigroup of the following DFA of a two-sided ideal meets the bound $n^{n-2} + (n-2) 2^{n-2} +1$.

\begin{definition}[Witness: Two-Sided Ideals]
\label{def:wit2}
For  $n\ge 4$, define the  DFA
$\cW_n =(Q,\Sig_\cW,\delta_\cW, 0,\{n-1\})$,
where $Q=\{0,\ldots,n-1\}$, $\Sig_\cW=\{a,b,c,d,e,f\}$,
and $\delta_\cW$ is defined by $a\colon (1,2,\ldots,n-2)$,
$b\colon (1,2)$,
$c\colon (n-2\to 1)$,
$d\colon (n-2\to 0)$,
for $q=0,\ldots,n-2$, $\delta(q,e)=1$ and $\delta(n-1,e)=n-1$,
and $f\colon (1\to n-1)$.
For $n=4$, inputs $a$ and $b$ coincide, and we can use $\Sig_\cW=\{b,c,d,e,f\}$.
\end{definition}

\begin{remark}
\label{rem:2sided}
If $n=1$, the only two-sided ideal is $\Sig^*$, its syntactic complexity is 1, and the  bound above is not tight. 
If $n=2$, each two-sided ideal is of the form $L=\Sig^*\Gamma\Sig^*$, where $\emp\subsetneq \Gamma\subseteq \Sigma$, its syntactic complexity is $2$, and the bound is tight.
If $n=3$, there are eight transformations that are initially aperiodic and such that $(n-1)t = t$ (the property of a right-ideal transformation).
We have verified that the DFA having all eight or any seven of the eight transformations is not a two-sided ideal. Hence $6$ is an upper bound, and we know from~\cite{BrYe11} that the transformations
$[1,2,2]$, $[0,0,2]$, and $[0,1,2]$ generate a 6-element semigroup.
From now on we may assume that $n\ge 4$.
\end{remark}

We consider a minimal DFA $\cD_n=(Q, \Sigma_\cD, \delta_\cD, 0,\{n-1\})$, where $Q=\{0,\ldots,n-1\}$, of an arbitrary two-sided ideal with $n$ quotients, and let $T_n$ be the transition semigroup of $\cD_n$. 
We also deal with the witness DFA $\cW_n =(Q,\Sig_\cW,\delta_\cW,0,\{n-1\})$ of Definition~\ref{def:wit2} with transition semigroup  $S_n$.

\begin{remark}\label{rem:2sided_transformations}
In $\cW_n$, the transformations induced by $a$, $b$, and $c$ restricted to $Q \setminus \{0,n-1\}$ generate all the transformations of the states $1,\ldots,n-2$. Together with the transformations of $d$ and $f$, they generate all transformations of $Q$ that fix $0$ and $n-1$.
For any subset $S \subseteq \{1,\ldots,n-2\}$, there is a transformation---induced by a word $w_S$, say---that maps $S$ to $n-1$ and fixes $Q \setminus S$. Then the words of the form $w_S e a^i$, for $i \in \{0,\ldots,n-3\}$, induce all transformations that maps $S \cup \{n-1\}$ to $n-1$ and $Q \setminus (S \cup \{n-1\})$ to $i+1$. In $\cW_n$, there is also the constant transformation $ef\colon (Q \to n-1)$.
\end{remark}

\begin{lemma}\label{lem:chain3}
If $n \ge 4$ and a maximal-length chain in $\cD_n$ strictly ordered by $\prec$ has length 3, then $|T_n|\le n^{n-2}+(n-2)2^{n-2}+1$, and $T_n$ is a subsemigroup of $S_n$.
\end{lemma}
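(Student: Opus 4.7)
\medskip

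\noindent
\textbf{Proof plan.} The strategy is a clean case analysis on $p = 0t$, exploiting the fact that when the maximal chain has length exactly $3$, all ``middle'' states $\{1,\dots,n-2\}$ are pairwise incomparable under $\prec$, and that state $n-1$ (which is $\Sigma^\ast$) is a sink because $L$ is also a right ideal. First I would record these two structural facts: every chain through a middle state has the form $0 \prec q \prec n-1$, so if two distinct states $p,q \in \{1,\dots,n-2\}$ satisfied $p \prec q$ we would get the length-$4$ chain $0 \prec p \prec q \prec n-1$. Also, since $L$ is a right ideal, $(n-1)t = n-1$ for every $t \in T_n$, and Remark~\ref{rem:left-ideals_xy2} (applicable because a two-sided ideal is in particular a left ideal) yields: if $r$ has a $t$-predecessor, then $0t \preceq r$.

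Next I would split transformations $t\in T_n$ by the value of $0t$. \textbf{Case A} ($0t=0$): Then $t$ fixes both $0$ and $n-1$. The constraint $0t \preceq qt$ for any $q$ in the image becomes trivial, and since middle states are pairwise incomparable, Proposition~\ref{prop:chain} imposes no relations among their images. So the $n-2$ middle states are mapped arbitrarily into $Q$, giving at most $n^{n-2}$ transformations, each of which lies in $S_n$ by the first part of Remark~\ref{rem:2sided_transformations}. \textbf{Case B} ($0t = n-1$): Every state in the image has $0t=n-1$ as a lower bound under $\preceq$, forcing every state to map to $n-1$. This is the single constant transformation $(Q \to n-1)$, which equals $ef$ in $S_n$. \textbf{Case C} ($0t = p$ with $p \in \{1,\dots,n-2\}$): Here is where the length-$3$ hypothesis does the real work. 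For any state $r$ in the image of $t$, we must have $p = 0t \preceq r$; combined with pairwise incomparability of the middle states, this forces $r \in \{p, n-1\}$ (it rules out $r=0$, any other middle state, and leaves $p$ itself or $n-1$). Thus $t$ is determined by choosing $p \in \{1,\dots,n-2\}$ and, for each of the $n-2$ middle states, deciding between $p$ and $n-1$. This yields at most $(n-2)\,2^{n-2}$ transformations, and each is exactly a transformation of the form described in Remark~\ref{rem:2sided_transformations} (with $S$ being the set of middle states sent to $n-1$), hence lies in $S_n$.

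Summing the three cases gives $|T_n| \le n^{n-2} + (n-2)\,2^{n-2} + 1$, and since every element of $T_n$ lies in $S_n$, the inclusion $T_n \subseteq S_n$ is a semigroup inclusion. The main subtlety — and the only step that is not immediate — is the argument in Case~C that the image of $t$ is confined to $\{p,n-1\}$. It hinges crucially on having no intermediate state strictly between two middle states, which is exactly what the length-$3$ hypothesis supplies; all other steps are straightforward bookkeeping using Remarks~\ref{rem:left-ideals_xy2} and~\ref{rem:2sided_transformations}.
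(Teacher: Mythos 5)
Your proposal is correct and follows essentially the same route as the paper: a case split on the value of $0t$ (fixing $0$; equal to $n-1$; equal to a middle state $p$), with the length-$3$ hypothesis forcing the image in the last case into $\{p,n-1\}$ via the length-$4$ chain contradiction, and membership in $S_n$ read off from Remark~\ref{rem:2sided_transformations}. Your explicit observation that the middle states are pairwise incomparable is just a repackaging of the paper's direct chain argument, so there is nothing substantively different to report.
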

\begin{proof}
Consider an arbitrary transformation $t\in T_n$; then $(n-1)t=n-1$. 
If $0t=0$, then any state not in $\{0,n-1\}$ can possibly be mapped by $t$  to any one of the $n$ states; hence there are at most $n^{n-2}$ such transformations.

If $0t\neq 0$, then
$0\prec 0t$. Consider any state $q\not \in \{0,0t\}$; since $\cD_n$ is minimal, $q$ must be reachable from 0 by some transformation $s$, that is, $q=0s$.
If $0st \not\in \{0t,n-1\}$, then $0t\prec 0st$ by Remark~\ref{rem:left-ideals_xy2}.
But then we have the chain $0\prec 0t\prec 0st\prec n-1$ of length 4, contradicting our assumption.
Hence we must have either $0st=0t$, or $0st=n-1$.
For a fixed $0t$, a subset of the states in $Q\setminus\{0,n-1\}$ can be mapped to $0t$ and the remaining states in $Q\setminus\{0,n-1\}$ to $n-1$, thus giving $2^{n-2}$ transformations.
Since there are $n-2$ possibilities for $0t$, we obtain the second part of the bound.
Finally, all states can be mapped to $n-1$.

By Remark~\ref{rem:2sided_transformations} all of the above-mentioned transformations are in $S_n$.
\qed
\end{proof}

\begin{theorem}[Two-Sided Ideals, Factor-Closed Languages]\label{thm:2sided-ideals_upper-bound}
If $L$ is a two-sided ideal or a factor-closed language with $n\ge 4$ quotients, then its syntactic complexity is less than or equal to 
$n^{n-2} + (n-2) 2^{n-2} +1$.
\end{theorem}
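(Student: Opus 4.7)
The plan is to mirror the strategy of Theorem~\ref{thm:left-ideals_upper-bound}. We fix a minimal DFA $\cD_n$ of an arbitrary two-sided ideal with transition semigroup $T_n$, and construct an injective map $f\colon T_n \to S_n$, where $S_n$ is the transition semigroup of the witness $\cW_n$ of Definition~\ref{def:wit2}. Since every two-sided ideal is also a left ideal, all prior machinery---initial aperiodicity, Remark~\ref{rem:left-ideals_xy2}, Proposition~\ref{prop:chain}---carries over; the extra constraint $(n-1)t=n-1$ for every $t\in T_n$ (coming from the right-ideal property) is exactly what drops the dominant term from $n^{n-1}$ to $n^{n-2}$. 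By complementation it suffices to prove the bound for two-sided ideals, and Remark~\ref{rem:2sided} lets us assume $n\ge 4$.

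Every maximal $\prec$-chain in $\cD_n$ has length at least~$3$, because each quotient contains $K_0=L$ and is contained in $K_{n-1}=\Sigma^*$. If some maximal chain has length exactly~$3$, Lemma~\ref{lem:chain3} gives the bound immediately (and shows $T_n\subseteq S_n$), so the only remaining case is that some maximal chain has length at least~$4$.

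Assuming the latter, we classify $t\in T_n$ according to $p=0t$. If $p=0$, then $t$ fixes both $0$ and $n-1$ and lies in $S_n$ by Remark~\ref{rem:2sided_transformations}; if $p=n-1$, then Remark~\ref{rem:left-ideals_xy2} forces $t$ to be the constant transformation $(Q\to n-1)\in S_n$; and if $p\in\{1,\dots,n-2\}$ and every $q\in Q\setminus\{0,n-1\}$ satisfies $qt\in\{p,n-1\}$, then again $t\in S_n$ by Remark~\ref{rem:2sided_transformations}. In all three of these ``good'' subcases we set $f(t)=t$. The ``bad'' $t$ are those with $p\in\{1,\dots,n-2\}$ and some $q$ for which $p\prec qt\prec n-1$; for these we follow the template of Cases~2 and~3 of the left-ideal proof, splitting according to whether $0t^2\ne 0t$, whether $t$ has a proper cycle, whether $t$ has a fixed point other than $p$ and $n-1$, and whether some $r\succ p$ satisfies $rt=p$. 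In each subcase we alter one or two transitions of $t$ to obtain $s=f(t)\in S_n$ that introduces a $\prec$-inversion forbidden by Proposition~\ref{prop:chain}, so $s\notin T_n$ and is therefore distinct from every Case~1 image; distinctness across and within subcases comes from identifying a structural fingerprint of $s$ (a unique state satisfying a prescribed $\prec$/$s$-condition) from which $p$ and the auxiliary data needed to invert the modification can be recovered.

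The main obstacle is the bookkeeping in the bad case. The witness $\cW_n$ for two-sided ideals is richer than its left-ideal counterpart: any subset of $\{1,\dots,n-2\}$ may be routed to $n-1$ via the letters $e,f$, so the misplaced states of a bad $t$ can form an arbitrary subset, and the modifications must act through $n-1$ rather than through $0$ as in the left-ideal argument. A secondary subtlety is verifying that the subcases exhaust all bad $t$; this is exactly where the assumption that some maximal chain has length at least~$4$ is used, as it guarantees a state $r$ with $p\prec r\prec n-1$, which is precisely the room needed for the modifications to land in $S_n$.
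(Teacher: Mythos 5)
Your framework is the right one (reduce to two-sided ideals, build an injection $f\colon T_n\to S_n$, reuse the left-ideal case analysis), and your identification of which $t$ lie in $S_n$ is correct. But the proposal hand-waves exactly where the new work is. You assert that the bad $t$ can be handled by ``following the template of Cases~2 and~3 of the left-ideal proof,'' with the difficulty dismissed as bookkeeping. In fact the left-ideal constructions \emph{fail} in several subcases and must be replaced, not adapted. In Case~2 of the left-ideal proof, $f(t)$ closes the chain $p\prec pt\prec\dots\prec pt^k$ into a cycle by setting $pt^k s=p$; when that chain terminates at $pt^k=n-1$ (which can happen only in the two-sided setting), this would move $n-1$, so $s$ would not fix $n-1$ and would not lie in $S_n$. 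The paper needs two genuinely new constructions here: one reversing the chain and sending $p$ to $n-1$ (when $k\ge 2$), and a separate one building a cycle $(p,rt)$ through an auxiliary incomparable state $r$ (when $pt=n-1$). Your list of splitting criteria ($0t^2\ne 0t$, cycles, fixed points other than $p$ and $n-1$, preimages of $p$) does not include the condition that distinguishes these, namely whether the chain from $p$ ends at $n-1$.

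The second omission is exhaustiveness. In the left-ideal proof, ``no cycle, no fixed point $\ne p$, no $r\succ p$ with $rt=p$'' is impossible, because the orbit of any $r\succ p$ must eventually cycle or hit $p$. For two-sided ideals that orbit can instead terminate at the fixed point $n-1$, so your four splitting criteria do not cover all bad $t$: you are missing the subcase of a $t$ with no cycles, no fixed point outside $\{p,n-1\}$, no preimage of $p$ above $p$, but some state $p\prec r\prec n-1$ mapped to $n-1$. This requires yet another construction (the paper fixes the preimages of $n-1$ and sends $p$ to $n-1$) together with new distinctness checks against all earlier images. Finally, a small misattribution: the ``room'' $p\prec r\prec n-1$ needed for the modifications comes from $t\notin S_n$ (a bad $t$ by definition sends some $q$ to a state strictly between $p$ and $n-1$), not from your assumption that some maximal chain has length at least~$4$; the paper's injection argument does not use the chain-length dichotomy at all, and Lemma~\ref{lem:chain3} is only needed for the uniqueness statement of Theorem~\ref{thm:2sided-ideals_unique-maximal}.
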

\begin{proof}
It suffices to prove the result for two-sided ideals.
As we did for left ideals,
we show that $|T_n| \le |S_n|$, by constructing an injective function $f\colon T_n \to S_n$.

We have $q \preceq n-1$ for any $q \in Q$, and $n-1$ is a fixed point of every transformation in $T_n$ and $S_n$.

For a transformation $t \in T_n$, consider the following cases:
\smallskip

\noin
\hglue 15 pt
{\bf Case 1:} $t \in S_n$. \\
The proof is the same as that of Case 1 of Theorem~\ref{thm:left-ideals_upper-bound}.
\smallskip

\noin
\hglue 15 pt
{\bf Case 2:} $t \not\in S_n$, and $0 t^2 \neq 0t$. \\
Let $0t=p \prec \dots \prec p t^k = p t^{k+1}$ be the chain defined from $p$.
\smallskip

\noin
\hglue 15 pt
$\bullet$ {\bf (a):} $p t^k \neq n-1$. \\
The proof is the same as that of Case 2 of Theorem~\ref{thm:left-ideals_upper-bound}.\smallskip

\noin
\hglue 15 pt
$\bullet$ {\bf (b):} $p t^k = n-1$ and $k \ge 2$.\\
Let $f(t)=s$, where $s$ is the transformation shown in Figure~\ref{fig:2sided-case2b} and defined by
\begin{center}
$0 s = 0, \quad p t^i s = p t^{i-1} \text{ for }1 \le i \le k-1 , \quad p s = n-1,$

$q s = q t \text{ for the other states } q\in Q.$
\end{center}
\begin{figure}[ht]
\unitlength 10pt
\begin{center}\begin{picture}(27,12)(0,0)
\gasset{Nh=2.5,Nw=2.5,Nmr=1.25,ELdist=0.5,loopdiam=2}
\node[Nframe=n](name)(0,12){$t\colon$}
\node(0)(2,10){0}
\node(p)(7,10){$p$}
\node(pt)(12,10){$pt$}
\node[Nframe=n](pdots)(17,10){$\dots$}
\node(pt^{k-1})(22,10){$pt^{k-1}$}
\node(n-1)(27,10){$n-1$}
\drawedge(0,p){$t$}
\drawedge(p,pt){$t$}
\drawedge(pt,pdots){$t$}
\drawedge(pdots,pt^{k-1}){$t$}
\drawedge(pt^{k-1},n-1){$t$}
\drawloop[loopangle=90](n-1){$t$}

\node[Nframe=n](name)(0,4){$s\colon$}
\node(0')(2,2){0}
\node(p')(7,2){$p$}
\node(pt')(12,2){$pt$}
\node[Nframe=n](pdots')(17,2){$\dots$}
\node(pt^{k-1}')(22,2){$pt^{k-1}$}
\node(n-1')(27,2){$n-1$}
\drawloop[loopangle=90,dash={.5 .25}{.25}](0'){$s$}
\drawedge[ELdist=-1,dash={.5 .25}{.25}](pt',p'){$s$}
\drawedge[ELdist=-1,dash={.5 .25}{.25}](pdots',pt'){$s$}
\drawedge[ELdist=-1,dash={.5 .25}{.25}](pt^{k-1}',pdots'){$s$}
\drawedge[curvedepth=-2.5,ELdist=-1,dash={.5 .25}{.25}](p',n-1'){$s$}
\drawloop[loopangle=90](n-1'){$s$}
\end{picture}\end{center}
\caption{Case~2(b) in the proof of Theorem~\ref{thm:2sided-ideals_upper-bound}.}
\label{fig:2sided-case2b}
\end{figure}
By Remark~\ref{rem:2sided_transformations}, $s \in S_n$. Note that $s$ contains the cycle $(p, pt)$ where $p t \succ p$,  $p t s = p$ and $p s = n-1$. By Proposition~\ref{prop:chain}, 
$pts\succeq ps$, that is, 
$p \succeq  n-1$, which contradicts the fact that $p \neq n-1$, since $pt \neq p$. Thus  $s$ is not in $T_n$, and so it is different from the transformations of Case~1.

Observe that $s$ does not have a cycle with states strictly ordered by $\prec$, since no state from $\{0, p, pt, \ldots, pt^{k-1} \}$ can be in a cycle, and $t$ cannot have such a cycle. Hence $s$ is different from the transformations of Case~2(a).

In $s$, there is a unique state $q$ such that $q s = n-1$ and for which there exists a state 
$r$ such that $r \succ q$ and $r s = q$, and that this state $q$ must be $p$.
Indeed, if $q \neq p$, then $q t = q s = n-1$ by the definition of $s$.
From $r \succ q$, we have $rt\succeq qt=n-1$; hence $rs=rt=n-1$ and $rt\neq q$---a contradiction. Hence $q=p$.

By a similar argument, we show that there exists a unique state $q$ such that $q \succ p$, and $q s = p$, and that this state $q$ must be $pt$. If $q \neq pt$ then $qs = qt$. But $q \succ qt$ and $p=qt \succeq qt^2=pt$ contradicts that $p \prec pt$. Continuing in this way for $pt^2,\ldots,pt^{k-1}$ we show that there is a unique chain
$pt^{k-1} \stackrel{s }{\rightarrow} \dots \stackrel{s }{\rightarrow} pt \stackrel{s }{\rightarrow} p$.

If $t'\neq t$ is another transformation satisfying the conditions of this case, we define $s'$ like $s$.
Now suppose that $s = f(t) = f(t') = s'$.
Since we have a unique state $p$ such that $p s = n-1$ for which there exists a state $r$ such that $r \succ p$ and $r s = p$, we have $0 t = 0 t' = p$.
Also the chain of states $p,pt,pt^2,\dots,pt^{k-1}$ is unique in $s$ and $s'$ as we have shown above; so $p t^{i} = p {t'}^{i}$ for $i=1,\dots,k-1$. Since the other states are mapped by $s$ exactly as by $t$ and $t'$, we have $t = t'$.
\smallskip

\noin
\hglue 15 pt
$\bullet$ {\bf (c):} $p t = n-1$.\\
Let $P=\{0,p,n-1\}$. Since $n\ge 4$, there must be a state $r\not\in P$. 
If $p\prec r$ for all $r\not\in P$, then $n-1=pt\preceq rt$; hence $rt=n-1$ for all such $r$,
and $q t \in \{p,n-1\}$ for all $q \in Q$.
By Remark~\ref{rem:2sided_transformations}, there is a transformation in $S_n$ that maps $S \cup \{n-1\}$ to $n-1$, and $Q \setminus (S \cup \{n-1\})$ to $p$ for any $S \subseteq \{1,\ldots,n-2\}$. Thus $t \in S_n$---a contradiction.

In view of the above, there must exist a state $r\not\in P$ such that $p\not\preceq r$.
By Remark~\ref{rem:left-ideals_xy2}, we have $p\preceq rt$ and of course $rt\preceq n-1$. If $rt$ is $p$ or $n-1$ for all $r \not\in P$, we again have the situation described above, showing that $t \in S_n$.
Hence there must exist an $r\not\in P$ such that $p\not\preceq r$ and $p \prec r t \prec n-1$.

Also we claim that $t$ does not have a cycle. Indeed, if $p \preceq q$, then $q$ is mapped to $n-1$; if $p \not\preceq q$, then $q$ is mapped to a state $q t \succeq p$ and again $q$ cannot be in a cycle since the chain starting with $q$ ends in $n-1$.

Let $f(t)=s$, where $s$ is the transformation shown in Figure~\ref{fig:2sided-case2c} and defined by
\begin{center}
$0 s = 0, \quad p s = r t, \quad (r t) s = p, \quad r s = 0,$

$q s = q t \text{ for the other states } q\in Q.$
\end{center}
\begin{figure}[ht]
\unitlength 10pt
\begin{center}\begin{picture}(18,14)(0,0)
\gasset{Nh=2.5,Nw=2.5,Nmr=1.25,ELdist=0.5,loopdiam=2}
\node[Nframe=n](name)(0,14){$t\colon$}
\node(0)(2,10){0}
\node(p)(10,10){$p$}
\node(n-1)(18,10){$n-1$}
\node(r)(6,14){$r$}
\node(rt)(14,14){$rt$}
\drawedge(0,p){$t$}
\drawedge(p,n-1){$t$}
\drawloop[loopangle=90](n-1){$t$}
\drawedge(r,rt){$t$}
\drawedge(rt,n-1){$t$}

\node[Nframe=n](name)(0,6){$s\colon$}
\node(0')(2,2){0}
\node(p')(10,2){$p$}
\node(n-1')(18,2){$n-1$}
\node(r')(6,6){$r$}
\node(rt')(14,6){$rt$}
\drawloop[loopangle=90,dash={.5 .25}{.25}](0'){$s$}
\drawedge[curvedepth=1,dash={.5 .25}{.25}](p',rt'){$s$}
\drawedge[curvedepth=1,dash={.5 .25}{.25}](rt',p'){$s$}
\drawedge[dash={.5 .25}{.25}](r',0'){$s$}
\drawloop[loopangle=90](n-1'){$s$}
\end{picture}\end{center}
\caption{Case~2(c) in the proof of Theorem~\ref{thm:2sided-ideals_upper-bound}.}
\label{fig:2sided-case2c}
\end{figure}

Since $s$ fixes both 0 and $n-1$, it is in $S_n$ by Remark~\ref{rem:2sided_transformations}. But $s$ is not in $T_n$, as we have the cycle $(p, r t)$ with $p \prec r t$. So $s$ is different from the transformations of Case~1.
Since $s$ maps a state other than $0$ to $0$, it is different from the transformations of Cases~2(a) and 2(b).

Observe that $t$ does not map any state to 0. Consequently, in $s$ there is the unique state $r\neq 0$ mapped to $0$. Also, as $t$ does not contain a cycle, the only cycle in $s$ must be $(p, r t)$.

If $t'\neq t$ is another transformation satisfying the conditions of this case, we define $s'$ like $s$. Now suppose that $s = f(t) = f(t') = s'$.
Because both $s$ and $s'$ have the unique non-fixed point $r$ mapped to $0$, $r = r'$. Also $s$ and $s'$ contain the unique cycle $(p, r t)$, $p \prec r t$. Thus $p = p'$, $p t = p t' = n-1$ and $r t = r t'$. It follows that $0 t = 0 t' = p$. Because $p \prec r t = r t'$, we have $(r t) t = (r t) t' = n-1$. The other states are mapped by $s$ exactly as by $t$ and $t'$, and so $t = t'$.
\smallskip

\noin
\hglue 15 pt
{\bf Case 3:} $t \not\in S_n$, $0 t = p \neq 0$ and $p t = p$.

\noin
\hglue 15 pt
$\bullet$ {\bf (a):} $t$ has a cycle.\\
The proof is analogous to that of Case~3(a) in Theorem~\ref{thm:left-ideals_upper-bound}, 
but we need to ensure that $s$ is different from the $s$ of Cases~2(b) and 2(c).

Here there is the state $r$ such that $r \prec r s$, and $r s$ and $r s^2$ are not comparable under $\preceq$.
Consider a transformation $t'$ that fits in Case~2(b). Then in $s'$ every state $q = pt^i$ for $0 \le i \le k-1$, and $q = 0$, is mapped to a state comparable with $q$ under $\preceq$, and the other states are mapped as in $t'$. Since $t' \in T_n$ cannot contain a state $r'$ such that $r' \prec r' t$ and  $r' t$ and $r' t^2$ are not comparable under $\preceq$, it follows that $s'$ also does not contain such a state. Thus $s \neq s'$.

For a distinction from the transformations of Case~2(c) observe that $s$ does not map to 0 any state other than 0.\smallskip

\noin
\hglue 15 pt
$\bullet$ {\bf (b):} $t$ has no cycles and has a fixed point $r \not\in \{p,n-1\}$.\\
The proof is analogous to that of Case~3(b) in Theorem~\ref{thm:left-ideals_upper-bound}, 
but we need to ensure that $s$ is different from the $s$ of Cases~2(b) and 2(c).

Since $s$ maps to 0 a state other than 0, this case is distinct from Case~2(b).
Because $t$ does not have a cycle, and no state $q$ mapped to $0$ can be in a cycle in $s$, it follows that $s$ does not have a cycle. Thus $s$ is different from the transformations of Case~2(c).
\smallskip

\noin
\hglue 15 pt
$\bullet$ {\bf (c):} $t$ has no cycles and no fixed point $r \not\in \{p,n-1\}$, but has a state $r \succ p$ mapped to $p$.\\
\noin The proof is analogous to that of Case~3(c) in Theorem~\ref{thm:left-ideals_upper-bound},
but we need to ensure that $s$ is different from the $s$ of Cases~2(b) and 2(c).

As before, since $s$ maps to 0 a state other than $0$, this case is distinct from Case~2(b).
In $s$, $0$ cannot be in a cycle, no state $q \succ p$ mapped to $0$ can be in a cycle and $p$ cannot be in a cycle as $ps = r$ and $rs = 0$. Since the other states are mapped as in $t$, $s$ does not have a cycle. Thus $s$ is different from the transformations of Case~2(c).
\smallskip

\noin
\hglue 15 pt
$\bullet$ {\bf (d):} $t$ has no cycles, no fixed point $r \not\in \{p,n-1\}$, and no state $r \succ p$ mapped to $p$, but has a state $r$ such that $p \prec r \prec n-1$, mapped to $n-1$.\\
Let $f(t)=s$, where $s$ is the transformation shown in Figure~\ref{fig:2sided-case3d} and defined by
\begin{center}
$0 s = 0, \quad q s = q \text{ for states } q \text{ such that } q t = n-1, \quad p s = n-1$
$qs = qt \text{ for the other states } q\in Q.$
\end{center}
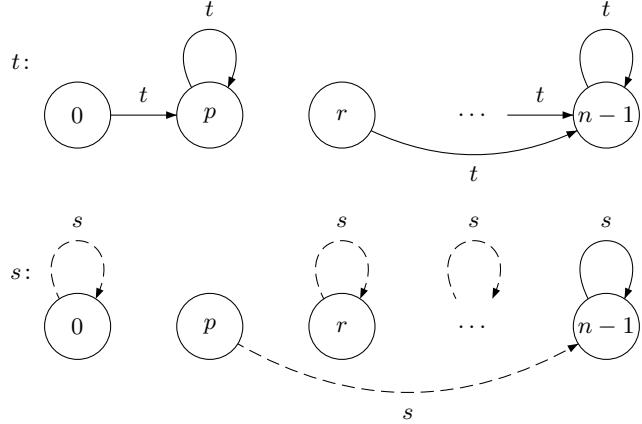
\begin{figure}[ht]
\unitlength 10pt
\begin{center}\begin{picture}(22,12)(0,0)
\gasset{Nh=2.5,Nw=2.5,Nmr=1.25,ELdist=0.5,loopdiam=2}
\node[Nframe=n](name)(0,12){$t\colon$}
\node(0)(2,10){0}
\node(p)(7,10){$p$}
\node(r)(12,10){$r$}
\node[Nframe=n](rdots)(17,10){$\dots$}
\node(n-1)(22,10){$n-1$}
\drawedge(0,p){$t$}
\drawloop[loopangle=90](p){$t$}
\drawloop[loopangle=90](n-1){$t$}
\drawedge[curvedepth=-1.5,ELdist=-1](r,n-1){$t$}
\drawedge(rdots,n-1){$t$}

\node[Nframe=n](name)(0,4){$s\colon$}
\node(0')(2,2){0}
\node(p')(7,2){$p$}
\node(n-1')(22,2){$n-1$}
\node(r')(12,2){$r$}
\node[Nframe=n](rdots')(17,2){$\dots$}
\drawloop[loopangle=90,dash={.5 .25}{.25}](0'){$s$}
\drawloop[loopangle=90,dash={.5 .25}{.25}](r'){$s$}
\drawloop[loopangle=90,dash={.5 .25}{.25}](rdots'){$s$}
\drawedge[curvedepth=-2.5,ELdist=-1,dash={.5 .25}{.25}](p',n-1'){$s$}
\drawloop[loopangle=90](n-1'){$s$}
\end{picture}\end{center}
\caption{Case~3(d) in the proof of Theorem~\ref{thm:2sided-ideals_upper-bound}.}
\label{fig:2sided-case3d}
\end{figure}

By Remark~\ref{rem:2sided_transformations}, $s \in S_n$. However, $s$ is not in $T_n$, as we have a fixed point $r$ such that $p \prec r \prec n-1$ and $p s = n-1$. So Proposition~\ref{prop:chain} yields $n-1 = p s \preceq r s = r$---a contradiction. Thus $s$ is different from the transformations of Case~1.

Transformation $s$ does not have any cycles, as $t$ does not have one in this case and fixed points $q$ and $p$ cannot be in a cycle. So $s$ is different from the transformations of Cases~2(a) and~3(a).
Also, since  $p$ is the unique state mapped to $n-1$ and there is no state $r \succ p$ mapped to $p$,  $s$ is different from the transformations of Case~2(b).
For a distinction from the transformations of Cases~2(c), 3(b) and 3(c), observe that $s$ does not map to 0 any state other than $0$.

If $t'\neq t$ is another transformation satisfying the conditions of this case, we define $s'$ like $s$.
Now suppose that $s = f(t) = f(t') = s'$.
Observe that $t$ does not have a fixed point other than $n-1$. So for every fixed point $q \not\in \{0,n-1\}$ of $s$ we have $q t = q t' = n-1$. Also, since $p$ is the unique state mapped to $n-1$ in $s$, $0 t = 0 t' = p$ and $p t = p t' = p$. The other states are mapped by $s$ as by $t$ and $t'$; so $t = t'$.
\smallskip

\hglue 5pt $\bullet$ {\bf All cases are covered:} \\ 
We need to ensure that any transformation $t$ fits in at least one case. It is clear that $t$ fits in Case~1 or 2 or 3. Any transformation from Case~2 fits in Case~2(a) or 2(b) or 2(c).
For Case~3, it is sufficient to show that if (i) $t \not\in S_n$ does not contain a fixed point $r\not\in \{p,n-1\}$, and (ii) there is no state $r$, $p \prec r \prec n-1$, mapped to $p$ or $n-1$, then $t$ has a cycle.

If there is no state $r$ such that $p \prec r \prec n-1$, then $q t \in \{p,n-1\}$ for any $q \in Q$, since $q t \succeq p$; by Remark~\ref{rem:2sided_transformations}, $t \in S_n$---a contradiction.

So let $r$ be some state such that  $p \prec r \prec n-1$. Consider the sequence $r,rt,rt^2,\ldots$. By Remark~\ref{rem:left-ideals_xy2}, $p \preceq rt^i$ for all $i \ge 0$. 
If $rt^k \in \{p,n-1\}$ for some $k \ge 1$, then let $i$ be the smallest such $k$. Then we have $(rt^{i-1}) t \in p$, contradicting (ii). Since $p$ and $n-1$ are the only fixed points by (i), we have $rt^i \neq rt^{i-1}$. Since there are finitely many states, $rt^i = rt^j$ for some $i$ and $j$ such that $0 \le i < j-1$, and so the states $rt^i,rt^{i+1}\ldots,rt^j=rt^i$ form a cycle.
\qed
\end{proof}

\begin{theorem}\label{thm:2sided-ideals_unique-maximal}
If $T_n$ has size $n^{n-2} + (n-2) 2^{n-2} +1$, then $T_n=S_n$.
\end{theorem}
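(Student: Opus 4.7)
The argument follows the template of Theorem \ref{thm:left-ideals_unique-maximal}. Take a maximal-length chain of states in $\cD_n$ strictly ordered by $\prec$. Since $\cD_n$ is a DFA of a two-sided ideal, every such chain starts with 0, ends with $n-1$, and has length at least 3. If the length is exactly 3, then Lemma \ref{lem:chain3} gives $T_n \subseteq S_n$, and so $T_n = S_n$ whenever the bound is attained. Otherwise the chain has length at least 4, and I fix states $0 \prec p \prec r \prec n-1$ in $\cD_n$ for the remainder of the argument.

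The plan is to exhibit a transformation $s \in S_n \setminus f(T_n)$, where $f \colon T_n \to S_n$ is the injection from the proof of Theorem \ref{thm:2sided-ideals_upper-bound}; this contradicts $|T_n| = |S_n|$. As auxiliary, let $u$ be the transformation that fixes $n-1$ and sends every other state to $p$. By Remark \ref{rem:2sided_transformations} this is a type-2 element of $S_n$ (with $S = \emptyset$ and $i = p$). Although $u$ itself belongs to $S_n$, it satisfies every other hypothesis of Case 3(c) of Theorem \ref{thm:2sided-ideals_upper-bound}: $0u = p \neq 0$, $pu = p$, $u$ has no cycle, its only fixed points are $p$ and $n-1$, and the fixed $r \succ p$ satisfies $ru = p$. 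Apply the Case 3(c) substitution rules to $u$ with this $r$, obtaining $s$ defined by $0s = 0$, $ps = r$, $qs = 0$ for every $q \succ p$ with $qu = p$, and $qs = qu$ for the remaining states. Since $s$ fixes 0 and $n-1$, Remark \ref{rem:2sided_transformations} gives $s \in S_n$.

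To show $s \not\in f(T_n)$, I reuse the distinctness observations already established in the proof of Case 3(c) of Theorem \ref{thm:2sided-ideals_upper-bound}. Because $p \prec r$ while $ps = r$ and $rs = 0$, Proposition \ref{prop:chain} yields $s \not\in T_n$, eliminating Case 1. Some state other than 0 (namely $r$) is mapped to 0, eliminating Cases 2(a), 2(b) and 3(a), whose images never send a state other than 0 to 0. The transformation $s$ has no cycle, eliminating Cases 2(c) and 3(a); its only fixed points are 0 and $n-1$, eliminating Case 3(b); and $p$ is mapped to $r \neq n-1$, eliminating Cases 2(b) and 3(d). It remains to rule out Case 3(c) itself: the injectivity argument of Case 3(c) in Theorem \ref{thm:2sided-ideals_upper-bound} determines $u'$ uniquely from $s$ and never uses the hypothesis $u' \not\in S_n$, so $s = f(u')$ with $u' \in T_n$ fitting Case 3(c) would force $u = u'$, contradicting $u \in S_n$.

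The main obstacle is the cross-case bookkeeping: Theorem \ref{thm:2sided-ideals_upper-bound} splits $T_n$ into eight subcases and we must verify $s$ is distinct from images of each. Choosing the auxiliary $u$ so that it mimics a Case 3(c) transformation is decisive, because it lets the distinctness observations from the upper-bound proof transfer verbatim, and the exclusion of Case 3(c) itself reduces to the known injectivity step combined with the contradiction $u \in S_n$.
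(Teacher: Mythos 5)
Your proposal is correct and follows essentially the same route as the paper: length-3 chains are handled by Lemma~\ref{lem:chain3}, and for a chain $0 \prec p \prec r \prec n-1$ you build $s$ from the auxiliary constant-like transformation $u$ via the Case~3(c) rules and check $s \in S_n \setminus f(T_n)$ against all subcases, exactly as the paper's (much terser) proof indicates. The only cosmetic remark is that your elimination of Case~3(d) via ``$p$ is mapped to $r \neq n-1$'' is cleaner if phrased as an invariant of $s$ itself (e.g.\ $s$ maps a state other than $0$ to $0$, which no Case~3(d) image does), but the conclusion is unaffected.
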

\begin{proof}
The proof is very similar to that of Theorem~\ref{thm:left-ideals_unique-maximal}.

Consider a maximal-length chain of states strictly ordered by $\prec$ in $\cD_n$. If its length is 3, then by Lemma~\ref{lem:chain3} $T_n$ is a subsemigroup of $S_n$. Thus only $T_n = S_n$ reaches the bound. 

If there is a chain of length 4, 
then there are states $p$ and $r$ such that $0 \prec p \prec r \prec n-1$. Let $f$ be the injective function from Theorem~\ref{thm:2sided-ideals_upper-bound}. Consider the transformation $u$ that maps $Q \setminus \{n-1\}$ to $p$ and fixes $n-1$.
Let $s$ be defined from $u$ in Case~3(c) of the proof of Theorem~\ref{thm:2sided-ideals_upper-bound}. The rest of the proof follows the proof of Theorem~\ref{thm:left-ideals_unique-maximal} with Case~3(d) of Theorem~\ref{thm:2sided-ideals_upper-bound} added.
\qed
\end{proof}

\providecommand{\noopsort}[1]{}

%

\begin{thebibliography}{1}
\providecommand{\url}[1]{\texttt{#1}}
\providecommand{\urlprefix}{URL }

\bibitem{Brz10}
Brzozowski, J.: Quotient complexity of regular languages. J. Autom. Lang. Comb.
   15(1/2),  71--89 (2010)

\bibitem{BJL13}
Brzozowski, J., Jir{\'a}skov{\'a}, G., Li, B.: Quotient complexity of ideal
  languages. Theoret. Comput. Sci.  470,  36--52 (2013)

\bibitem{BrYe11}
Brzozowski, J., Ye, Y.: Syntactic complexity of ideal and closed languages. In:
  Mauri, G., Leporati, A. (eds.) DLT 2011. LNCS, vol. 6795, pp. 117--128.
  Springer (2011)

\bibitem{Pin97}
Pin, J.E.: Syntactic semigroups. In: Handbook of Formal Languages, vol.~1:
  Word, Language, Grammar, pp. 679--746. Springer, New York, NY, USA (1997)

\bibitem{Yu01}
Yu, S.: State complexity of regular languages. J. Autom. Lang. Comb.  6,
  221--234 (2001)

\end{thebibliography}

\end{document}